\newtheorem{thm}{Theorem}
\newtheorem{lem}[thm]{Lemma}
\theoremstyle{definition}
\newtheorem{defn}[thm]{Definition}
\newtheorem{defn-lem}[thm]{Definition/Lemma}
\theoremstyle{remark}
\newcommand{\R}{{\mathbb{R}}}
\newcommand{\cC}{{\mathcal{C}}}
\renewcommand{\phi}{\varphi}
\newcommand{\lra}[1]{\left\langle{#1}\right\rangle}
\newcommand{\lrp}[1]{\left({#1}\right)}
\newcommand{\lrs}[1]{\left\{{#1}\right\}}
\newcommand{\lrm}[1]{\left|{#1}\right|}
\newcommand{\wv}{w}
\newcommand{\we}{p}
\newcommand{\wwe}{c}
\newcommand{\tx}{\tilde{x}}
\newcommand{\ty}{\tilde{y}}
\renewcommand{\O}{\operatorname{OPT}}
\newcommand{\cost}[1]{\operatorname{cost}\left({#1}\right)}
\title{\Large Weighted Partition Vertex and Edge Cover}
   \author{Rajni Dabas\thanks{Northwestern University, Evanston IL 60208.}
    \and Samir Khuller\footnotemark[1]
    \and Emilie Rivkin\footnotemark[1]}
\date{}
\begin{document}

\maketitle

\begin{abstract}
We study generalizations of the classical Vertex Cover and Edge Cover problems that incorporate group-wise coverage constraints. Our first focus is the \emph{Weighted Prize-Collecting Partition Vertex Cover} (WP-PVC) problem: given a graph with weights on both vertices and edges, and a partition of the edge set into $\omega$ groups, the goal is to select a minimum-weight subset of vertices such that, in each group, the total weight (profit) of covered edges meets a specified threshold. This formulation generalizes classical vertex cover, partial vertex cover and partition vertex cover.

We present two algorithms for WP-PVC. The first is a simple 2-approximation that solves 
\( n^{\omega} \) LP's, improving over prior work by Bandyapadhyay et al.\ by removing an enumerative step and the extra \( \epsilon \)-factor in approximation, while also extending to the weighted setting. The second is a bi-criteria algorithm that applies when \( \omega \) is large, approximately meeting profit targets with a bounded LP-relative cost.

We also study a natural generalization of the edge cover problem, the \emph{Weighted Partition Edge Cover} (W-PEC) problem, where each edge has an associated weights, and the vertex set is partitioned into groups. For each group, the goal is to cover at least a specified number of vertices using incident edges, while minimizing the total weight of the selected edges. We present the first exact polynomial-time algorithm for the weighted case, improving runtime from \( O(\omega n^3) \) to \( O(mn+n^2 \log  n) \) and simplifying the algorithmic structure over prior unweighted approaches. We also show that the prize-collecting variant of the W-PEC problem is NP-Complete via a reduction from the knapsack problem.
\end{abstract}

\section{Introduction}
The classical vertex cover problem in weighted graphs is defined as follows: given a graph $G=(V,E)$ with weights on the vertices, find a minimum weighted subset of vertices $S$ so that every edge of the graph is covered, in other words, at least one of its end points is in $S$.

This extremely simple problem has a rich history. It was one of the earliest problems shown to be NP-complete by Karp \cite{Karp-reducibility-72} and was shown to have a polynomial time solution when the graph is bipartite. Gavril \cite{gavril-72} showed that for the unweighted case, a simple algorithm that takes all the vertices of any maximal matching is a 2-approximation (because the cardinality of any maximal matching is a lower bound on the optimal solution size). Subsequently, this simple algorithm was extended to weighted graphs as well, but this required significantly deeper understanding of the role played by LP relaxations \cite{hochbaum-VC-82} and finally a simple combinatorial algorithm was developed based on LP duality \cite{bar-yehuda-even-81}. This problem also paved the way for improvements and the well known local ratio technique \cite{bar-yehuda-even-local-83} that found numerous applications. In addition, approximation algorithms easily follow from the work of Nemhauser and Trotter who showed that an optimal LP solution has a very simple structure \cite{nemhauser-trotter-75}. 

Many extensions have been developed, and each requires new methods and approaches - for example partial vertex cover\footnote{Rather than covering all the edges, we only require the algorithm to cover a specified number of edges.} \cite{bshouty-burroughs-98, bar-yehuda-PVC-99,gandhi-PVC-01} or capacitated vertex cover \cite{guha-capVC-03}. In these extensions, the unweighted version is a lot simpler than the weighted version and the resulting algorithms are significantly simpler (just like in the basic vertex cover problem). The approximation guarantee for the weighted case remains the same as the unweighted case, despite a more involved algorithm being required. A notable exception is the capacitated vertex cover problem with hard capacities - in this case while the unweighted problem has a 2 approximation \cite{kortsarz-hardcapVC-06}, the weighted problem is at least as hard as set cover \cite{chuzhoy-hardcapVC-06}. In other words, the generalization obtained by giving weights to the vertices makes the problem significantly harder to approximate.

In this paper we study a natural generalization of Vertex Cover called Partition Vertex cover, which was introduced by Bera et al \cite{partition-VC}.  This problem is a
 natural generalization of the Partial Vertex Cover problem. Here an instance consists of a graph, a weight function on the vertices, a partition of the edge set $E=E_1\ldots E_\omega$, and a parameter  $r_i$ for each partition. The objective is to find a minimum weight set of vertices that cover at least $r_i$ edges from partition $E_i$. This is Partition-VC problem.  When $\omega=1$, this is the same as partial cover. 
 
 Bera et al \cite{partition-VC} in their work propose a randomized rounding algorithm with an approximation guarantee of $O(\log \omega)$\footnote{\(\mbox{The exact function is }6\log\frac{8}{5}\cdot\log\omega\)}.  They strengthen the natural LP by adding an exponential number of constraints and use the ellipsoid method. This algorithm is not practical even for $\omega=2$. However, for constant $\omega$, a better guarantee for the unweighted case was proposed by Bandyapadhyay et al. \cite{colorful-vertex-edge}. In this work, they propose an algorithm that takes time $n^{O\lrp{n/\epsilon}}$ and yields an approximation guarantee of $2+\epsilon$ for any $\epsilon>0$. 
 

A natural generalization of the problem is for the case when edges have associated "prizes" and for each color class, there is a lower bound on total prize to be obtained. This generalizes the problem further from the case when each edge has prize of one.

\begin{defn}[Weighted Prize-Collecting Partition Vertex Cover (WP-PVC)]
We are given a graph \( G = (V, E) \) along with weight functions \( \wv : V \to \mathbb{R}_{\geq0} \) and \( \we : E \to \mathbb{R}_{\geq0} \), and a partition of the edges into \( \omega \) disjoint subsets \( \mathcal{C}_1, \mathcal{C}_2, \dots, \mathcal{C}_\omega \). For each group \( \mathcal{C}_g \), there is a required minimum profit threshold \( \rho_g \in \mathbb{R}^+ \).

The goal is to select a subset of vertices \( S \subseteq V \) minimizing the total vertex weight \( \sum_{v \in S} \wv(v) \), such that for each group \( \mathcal{C}_g \), the total weight of edges in \( \mathcal{C}_g \) that are incident to at least one vertex in \( S \) is at least \( \rho_g \), i.e.,
\[
\sum_{\substack{e = (u,v) \in \mathcal{C}_g\\ u \in S \text{ or } v \in S}} \we(e) \geq \rho_g \quad \text{for all } g \in [\omega].
\]
\end{defn}

Note that when \( \wv(v) = \we(e) = 1 \) for all \( v \in V \), \( e \in E \), the problem reduces to the \emph{partition vertex cover} problem studied in~\cite{partition-VC}. A further special case, where \( \omega = 1 \) and \( \rho_1 = |E| \), yields the standard \emph{vertex cover} problem.

In this work, we present two algorithms for the WP-PVC problem. We first give an extremely simple and practical 2-factor approximation algorithm that runs in time \(n^{O\lrp{\omega}}\) and leverages a collection of LPs. In particular, we achieve \Cref{thm:VC1}. For constant \( \omega \), this yields a polynomial-time 2-approximation. Notably, \Cref{thm:VC1} improves upon the result of Bandyapadhyay et al.~\cite{bandyapadhyay_constant_2019} in two key ways. First, our result holds in a more general setting where both vertices and edges have weights; the enumerative step used in~\cite{bandyapadhyay_constant_2019} is not easily extensible to the weighted case. Second, eliminating this enumerative step also removes the additional \( \epsilon \) factor in the approximation guarantee and improves the runtime from \( n^{O(\omega/\epsilon)} \) to \(n^{O\lrp{\omega}}\).

\begin{thm}\footnote{A similar result, due to Xiaofei Liu and Weidong Li, will appear in COCOON 2025~\cite{LiuLi-COCOON2025}.}
\label{thm:VC1}
There exists a 2-factor approximation algorithm for the \emph{Weighted Prize-Collecting Partition Vertex Cover} (WP-PVC) problem that runs in time \ \(n^{O\lrp{\omega}}\).
\end{thm}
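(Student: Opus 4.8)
The plan is to round the natural LP relaxation and to use the enumeration over $n^{O(\omega)}$ guesses to control the single place where LP rounding for covering problems of this kind is known to fail. I would work with the relaxation that has a variable $x_v\in[0,1]$ for each vertex (membership in $S$) and a variable $z_e\in[0,1]$ for each edge (whether $e$ is counted as covered):
\[
\min \sum_{v\in V}\wv(v)\,x_v \quad\text{s.t.}\quad z_e\le x_u+x_v\ \ \forall e=(u,v)\in E,\qquad \sum_{e\in\cC_g}\we(e)\,z_e\ge\rho_g\ \ \forall g\in[\omega],
\]
with $0\le x_v,z_e\le 1$. Any integral feasible $S$ yields a feasible LP point, so the LP optimum is at most $\O$. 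The target is an integral feasible set with $\cost{S}\le 2\cdot(\text{LP})\le 2\,\O$, and the natural rounding is the Nemhauser--Trotter threshold: put every $v$ with $x_v\ge 1/2$ into $S$. This alone gives $\cost{S}\le 2\cdot(\text{LP})$, since every selected vertex has $x_v\ge 1/2$.

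\textbf{Where it breaks, and the role of the guess.} Threshold rounding covers every edge with an endpoint of value $\ge 1/2$, but it loses all profit on ``light'' edges $e=(u,v)\in\cC_g$ with $x_u,x_v<1/2$; the LP may legitimately route a large share of the requirement $\rho_g$ through such edges, so after rounding a group can fall short. This is exactly the integrality-gap phenomenon separating partition cover from ordinary vertex cover. To neutralize it, for each group $g$ I would guess the heaviest vertex $v_g$ that $\O$ uses to cover a $\cC_g$ edge; enumerating all tuples $(v_1,\dots,v_\omega)$ costs $n^{O(\omega)}$, and the algorithm solves one LP per tuple, returning the cheapest feasible rounded solution. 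For the correct tuple I force $x_{v_g}=1$ for all $g$ and, in the profit constraint of group $g$, credit coverage only from vertices of weight at most $\wv(v_g)$ (a per-group cap implemented by zeroing heavier vertices' coefficients there). Since $\{v_g\}_g\subseteq\O$ and $\O$ covers each group using vertices no heavier than $v_g$, the point coming from $\O$ is still feasible for the constrained LP, so its value remains $\le\O$ and the factor-$2$ cost bound is preserved.

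\textbf{The crux.} The main obstacle is the feasibility analysis: showing that, for the correct tuple, the rounded set simultaneously meets $\rho_g$ in every group. The per-group caps ensure any profit the LP assigns to a light $\cC_g$ edge is backed by a fractional endpoint of weight at most $\wv(v_g)$, while forcing $x_{v_g}=1$ supplies a fully paid vertex of that same weight scale; the plan is to charge the residual profit lost at light edges to included vertices without exceeding the budget, using that $v_g$ is the \emph{heaviest} such vertex to bound the charge. The delicate point is that vertices are shared across groups, so one rounding must satisfy all $\omega$ caps and thresholds at once; I expect the bulk of the work to be a careful accounting argument (or, alternatively, an exchange argument on a basic optimal LP solution, whose fractional support is limited by there being only $\omega$ profit constraints) verifying that no group is left deficient. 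The runtime claim is then immediate: $n^{O(\omega)}$ tuples, each needing one polynomial-time LP solve and a linear rounding pass.
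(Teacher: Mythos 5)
Your proposal correctly identifies the failure mode of threshold rounding (groups whose profit the LP routes through ``light'' edges, both of whose endpoints fall below the threshold) and correctly reaches for $n^{O(\omega)}$ enumeration, but the step you yourself call the crux is genuinely missing, and the fix you sketch cannot close it. Forcing $x_{v_g}=1$ and capping per-group coefficients addresses \emph{cost}, not \emph{coverage}: after thresholding, the profit lost on light edges of $\mathcal{C}_g$ that are not incident to $v_g$ is simply gone, and a single guessed vertex per group cannot recover it, since the LP may legitimately spread nearly all of $\rho_g$ over many light edges far from $v_g$. A ``charging'' argument cannot repair this, because the deficiency is in covered profit, not in budget; charging lost profit to the weights of included vertices conflates the two quantities. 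Your parenthetical alternative --- take a basic optimal LP solution and argue its fractional support is bounded by the number of profit constraints --- is the right instinct, but it does not apply to the LP you wrote: that LP has $|E|$ covering constraints $z_e \le x_u + x_v$ in addition to the $\omega$ profit constraints, so a basic solution can have many fractional vertex variables.

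The paper's proof supplies exactly the two ingredients you are missing. First, instead of thresholding at $1/2$, it doubles the fractional solution and concentrates each edge's coverage at its larger endpoint: with $\phi(e)$ the endpoint of larger $y'$ value, set $\tilde{y}_v=\min\{1,2y'_v\}$ and $\tilde{x}_e=\tilde{y}_{\phi(e)}$. This costs a factor of $2$ but keeps every group constraint satisfied \emph{fractionally}, since $\tilde{x}_e \ge y'_u + y'_v \ge x'_e$, so no profit is discarded at this stage. Second, because coverage is now expressed purely in vertex variables, one can write a ``sparse'' LP with only $\omega$ nontrivial constraints (group coverage for $g\ge 2$ plus a cost cap, maximizing coverage of group $1$); a basic optimal solution of this LP has at most $\omega$ fractional variables, and rounding only these up is where the enumeration pays off. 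The paper guesses the $\omega$ \emph{globally} heaviest vertices of the optimal solution (not your per-group heaviest vertex covering each $\mathcal{C}_g$), sets their weight to $0$, and forbids all heavier vertices; consequently each of the at most $\omega$ fractional variables rounds up at cost at most the weight of the lightest guessed vertex, for total extra cost at most $\sum_{v\in V_\omega} w(v)$, which combines with the factor-$2$ LP bound to give cost at most $2\operatorname{OPT}$. Without the concentration map $\phi$ and the resulting sparse LP, the accounting you hope for has no structure to hang on, so as written the proposal does not establish the theorem.
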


Our second algorithm is a bi-criteria approximation, as stated in \Cref{thm:VC2}. This result holds even when the number of groups \( \omega \) is large or grows with the input size.

\begin{thm}
\label{thm:VC2}
For any fixed \( \epsilon > 0 \), there exists a polynomial-time algorithm that returns a subset \( S \subseteq V \) such that:
\begin{enumerate}
    \item The total vertex weight is at most \( \frac{1}{\epsilon} \cdot \O_{LP} \).
    \item For every group \( \mathcal{C}_g \), the total weight of edges covered by \( S \) is at least \( (1 - 2\epsilon) \cdot \rho_g \).
\end{enumerate}
\end{thm}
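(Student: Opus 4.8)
The plan is to solve the natural LP relaxation and round an optimal \emph{basic} solution by thresholding at $\epsilon$. For each $v\in V$ introduce $x_v\in[0,1]$ and for each $e\in E$ a coverage variable $z_e\in[0,1]$, and solve
\[
\min \sum_{v\in V}\wv(v)\,x_v \quad\text{s.t.}\quad z_e\le x_u+x_v\ \ \forall\,e=(u,v)\in E,\qquad \sum_{e\in\cC_g}\we(e)\,z_e\ge\rho_g\ \ \forall\,g\in[\omega],\qquad 0\le x_v,z_e\le1.
\]
Let $(x^\ast,z^\ast)$ be a basic optimum with value $\O_{LP}$; since $z$ is absent from the objective I may raise it to $z_e^\ast=\min(1,x_u^\ast+x_v^\ast)$. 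The algorithm outputs $S=\{v:\ x_v^\ast\ge\epsilon\}$, which requires only one LP solve and so runs in polynomial time even when $\omega$ is large.

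Part (1) is immediate: every $v\in S$ has $x_v^\ast/\epsilon\ge1$, so $\sum_{v\in S}\wv(v)\le\frac1\epsilon\sum_{v\in S}\wv(v)\,x_v^\ast\le\frac1\epsilon\,\O_{LP}$. For part (2) I fix a group $g$ and split $\cC_g$ into edges covered by $S$ and the rest. An uncovered edge $e=(u,v)$ has $x_u^\ast,x_v^\ast<\epsilon$, hence $z_e^\ast\le x_u^\ast+x_v^\ast<2\epsilon$, while covered edges have $z_e^\ast\le1$; together with $\sum_{e\in\cC_g}\we(e)z_e^\ast\ge\rho_g$ this yields
\[
\sum_{\substack{e\in\cC_g\\ e\text{ covered}}}\we(e)\ \ge\ \rho_g-\sum_{\substack{e\in\cC_g\\ e\text{ uncovered}}}\we(e)\,z_e^\ast\ \ge\ \rho_g-2\epsilon\sum_{\substack{e\in\cC_g\\ e\text{ uncovered}}}\we(e).
\]

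The hard part will be converting the last sum into $2\epsilon\rho_g$, i.e. showing the threshold discards at most a $2\epsilon$ fraction of the required profit. This fails for a \emph{generic} fractional optimum: a group can pile arbitrarily large total profit onto edges all of whose endpoints carry tiny $x^\ast$ (a ``spread-thin'' solution), which the threshold then fails to cover at all. The point of insisting on a basic optimum is exactly to exclude this: the vertex-cover-type constraints pin $x^\ast$ to a nearly half-integral shape, with the fractional coverage confined to the $O(\omega)$ edges charged to the binding profit constraints, so coverage cannot be smeared over unboundedly many low-value edges. To make the $2\epsilon$ precise I would classify each $\cC_g$ into \emph{heavy} edges—few in number and forced to large $z^\ast$, hence covered—and \emph{light} edges, each contributing at most an $\epsilon$-fraction, and charge the uncovered light profit against $\rho_g$; a randomized-rounding variant that includes $v$ with probability $\min(1,x_v^\ast/\epsilon)$ gives an alternative route, in which the same conclusion follows with large slack from a lower-tail concentration bound together with a union bound over the $\omega$ groups. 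Establishing the extreme-point structure and verifying the charging loses at most $2\epsilon\rho_g$ simultaneously in every group is the main obstacle; the cost bound and the per-edge estimate above are routine.
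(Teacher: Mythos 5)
Your algorithm coincides with the paper's: solve the natural LP relaxation and output $S=\{v: y_v^*\ge\epsilon\}$ (your $x_v^*,z_e^*$ are the paper's $y_v^*,x_e^*$), and your part (1) is exactly the paper's cost argument. The divergence is at part (2). The paper closes it by asserting, for the set $\cC_g^{\text{out}}$ of edges with both endpoints below threshold,
\[
\sum_{e \in \cC_g^{\text{out}}} \we(e) \;\le\; \sum_{e \in \cC_g^{\text{out}}} \we(e)\,\frac{x_e^*}{2\epsilon} \;\le\; 2\epsilon \rho_g,
\]
but this chain is invalid: the first inequality requires $x_e^*\ge 2\epsilon$, whereas the defining property of these edges is $x_e^*<2\epsilon$; and the second amounts to $\sum_{e\in\cC_g^{\text{out}}}\we(e)x_e^*\le 4\epsilon^2\rho_g$, for which no justification is given. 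So the step you flagged as ``the hard part'' is precisely the step at which the paper's own proof fails. What threshold rounding actually guarantees is coverage at least $\rho_g-2\epsilon W_g$, where $W_g=\sum_{e\in\cC_g}\we(e)$ is the total group weight; this equals the claimed $(1-2\epsilon)\rho_g$ only when $\rho_g=W_g$.

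Unfortunately, none of your proposed repairs can close the gap, because the statement is false for every fixed $\epsilon<1/2$. Take a star with center $c$ and leaves $l_1,\dots,l_N$ ($N$ large), all vertex and edge weights equal to $1$, a single group $\cC_1=E$, and $\rho_1=\epsilon N/2$. The LP optimum is unique: $y_c^*=\epsilon/2$, $y_{l_i}^*=0$, $x_e^*=\epsilon/2$ for all $e$, with $\O_{LP}=\epsilon/2$. Being unique, it is in particular a basic (extreme-point) solution, so insisting on a basic optimum produces no half-integral structure: thresholding still outputs $S=\emptyset$, and your randomized rounding picks $c$ with probability $1/2$, giving all-or-nothing coverage at cost $1$ whenever coverage is positive. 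In fact no algorithm whatsoever can meet both criteria on this instance: any $S$ with positive coverage is nonempty and hence costs at least $1>\frac{1}{\epsilon}\O_{LP}=\frac{1}{2}$, while the requirement $(1-2\epsilon)\rho_1$ is strictly positive. Your ``spread-thin'' objection, instantiated on a star, thus refutes the theorem itself rather than merely the generic-optimum proof. The correct conclusion is not that extra LP structure is needed but that the guarantee must be weakened, e.g.\ to coverage $\rho_g-2\epsilon W_g$ (equivalently $(1-2\epsilon/\alpha)\rho_g$ under the assumption $\rho_g\ge\alpha W_g$), or the cost bound must be relaxed by an additive $\max_v \wv(v)$ term.
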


While the vertex cover problem focuses on selecting a subset of vertices to cover all edges, a natural dual problem is the \emph{edge cover} problem. In this variant, the goal is to select a subset of edges such that every vertex is incident to at least one selected edge. Although structurally similar to vertex cover, the edge cover problem has very different algorithmic properties—it can be solved in polynomial time via reductions to maximum matching—and it arises in applications such as network design and resource allocation.

In this paper, we study a natural generalization of the edge cover problem called the \emph{Partition Edge Cover} (Partition-EC) problem:

\begin{defn}[Weighted Partition-EC]
We are given a graph \(G=\lrp{V,E}\) with a weight function \(\wwe:E\to\R_{\geq0}\), a partition of the vertex set \( V = V_1 \cup \dots \cup V_\omega \), and a parameter \( r_g \) for each group. We want to find a minimum-weight subset of edges such that at least \( r_g \) vertices from group \( V_g \) are covered.
\end{defn}

Bandyapadhyay et al.~\cite{bandyapadhyay_constant_2019} studied this problem in the unweighted setting and gave an exact polynomial-time algorithm with a runtime of \( O(\omega n^3) \). Their approach first reduces the problem to \emph{Budgeted Matching}, which is then reduced to \emph{Tropical Matching}, allowing the use of known algorithms for the latter to obtain an optimal solution.

In this work, we present a simpler algorithm with improved running time compared to that of Bandyapadhyay et al.~\cite{bandyapadhyay_constant_2019}. Our algorithm avoids the use of tropical matching and runs in time \({O}(n^3) \). Moreover, our approach naturally extends to the more general \emph{weighted} setting, where edges have arbitrary non-negative weights. In contrast, it is not clear whether the tropical matching-based approach of~\cite{bandyapadhyay_constant_2019} can be extended to handle weighted instances. Our main result is summarized below.

\begin{thm}
\label{thm:EC}
The Weighted Partition Edge Cover problem can be solved exactly in \( O(mn+n^2 \log n) \) time.
\end{thm}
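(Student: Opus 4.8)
The plan is to show that \Cref{thm:EC} reduces, in linear time, to a single minimum-weight matching computation on an auxiliary graph of size $O(n)$, which can then be solved in $O(mn+n^2\log n)$ time. The two ingredients are (i) a structural description of optimal solutions as disjoint unions of stars, together with the classical ``cheapest-edge plus savings'' encoding of minimum-weight edge covers, and (ii) a gadget that turns the partition thresholds $r_g$ into capacity constraints inside the matching instance. I expect the main difficulty to lie in proving that this single instance exactly captures the optimum, in particular in handling the \emph{incidental} coverage produced by the leaves of large stars.

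First I would record the structure theorem. Since the edge weights are non-negative, any optimal solution may be taken to be minimal with respect to inclusion, and a minimal edge cover of its set of covered vertices is a disjoint union of stars (any path of length three contains a deletable middle edge). Hence it suffices to search over covered sets $U$ with $|U\cap V_g|\ge r_g$ and, for each, to compute the minimum-weight edge set covering $U$. Writing $a_v:=\min_{e\ni v}\wwe(e)$ for the cheapest edge at $v$ and $w(u,v):=\max\{0,\,a_u+a_v-\wwe(u,v)\}$, I would prove the identity
\[
(\text{min-weight cover of }U)\;=\;\sum_{v\in U}a_v\;-\;\operatorname{mwm}\bigl(G[U],w\bigr),
\]
where $\operatorname{mwm}$ is a maximum-weight matching: every $v\in U$ buys its cheapest incident edge, and a matched pair $\{u,v\}$ instead shares the single edge $uv$, saving $w(u,v)$. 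The one subtlety is that buying a cheapest edge also covers a second vertex ``for free''; I would argue that such incidental coverage never hurts feasibility and can always be absorbed into $U$, so that the coverage constraints may be imposed directly on $U$.

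Substituting $a_u+a_v-w(u,v)=\wwe(u,v)$ on matched pairs turns the objective into $\sum_{e\in M}\wwe(e)+\sum_{v\in U\setminus V(M)}a_v$, to be minimized over a matching $M$ and a set $U\supseteq V(M)$ with $|U\cap V_g|\ge r_g$. This I would encode as a minimum-weight matching on an auxiliary graph $H$: keep every original edge with weight $\wwe(e)$ (these realize $M$); attach to each vertex $v$ a private ``filler'' node joined by an edge of weight $a_v$ (matching $v$ to it means $v$ is covered as an isolated star-leaf); and, for each group $g$, add a slack gadget that lets at most $s_g:=|V_g|-r_g$ vertices of $V_g$ go uncovered at zero cost. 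Requiring that every original vertex be saturated, a matching of $H$ then corresponds exactly to a choice of $(M,U)$, and the slack capacities enforce $|U\cap V_g|\ge r_g$. I would verify the correspondence in both directions: the cheapest-edge-plus-savings solution for an optimal $U$ yields a saturating matching of $H$ of equal weight, and conversely any saturating matching of $H$ yields a feasible partition edge cover of the same weight.

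Finally, $H$ has $O(n)$ vertices and, with a sparse slack gadget, $O(m+n)$ edges, all of non-negative weight. A minimum-weight saturating matching is found by $O(n)$ shortest-path augmentations with potentials (the standard primal--dual implementation), each costing $O(m+n\log n)$, for a total of $O(mn+n^2\log n)$, as claimed in \Cref{thm:EC}. The routine parts are the weight bookkeeping in the matching identity and the precise slack gadget; the conceptual crux, and where I would spend the most care, is the equivalence proof of the previous paragraph, especially the claim that restricting to disjoint stars and folding incidental leaf-coverage into $U$ loses no optimality.
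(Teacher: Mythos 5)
Your framework is sound up to one point, and in fact it closely mirrors the paper's reduction: your ``filler'' node with edge weight $a_v=\min_{e\ni v}\wwe(e)$ is exactly the paper's auxiliary vertex carrying the minimum adjacent weight, and your per-group slack is exactly the paper's set $S'_g$ of $|V_g|-r_g$ slack nodes. The edge-cover-to-matching identity (star forests, cheapest-edge-plus-savings) and the absorption of incidental coverage into $U$ are correct and standard. The genuine gap is the phrase ``with a sparse slack gadget, $O(m+n)$ edges.'' You never construct this gadget, and it is precisely the crux of the runtime claim. In a plain matching instance the natural slack construction joins $|V_g|-r_g$ slack nodes to \emph{all} of $V_g$, which costs $\Theta\bigl(\sum_g(|V_g|-r_g)|V_g|\bigr)=\Theta(n^2)$ edges in the worst case; then even Gabow's $O(n(m'+n\log n))$ general weighted matching algorithm gives only $O(n^3)$, i.e., no improvement over the prior bound you are trying to beat. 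Moreover, a sparse gadget of the kind you want is obstructed by a counting argument: if the vertices of $V_g$ attach to a set $B$ of slack/internal nodes and every subset of $s_g:=|V_g|-r_g$ vertices must be matchable into $B$ while no subset of $s_g+1$ is, then (by a K\H{o}nig/Hall argument) each essential node of $B$ must be adjacent to at least $|V_g|-s_g+1$ vertices of $V_g$, forcing $\Omega\bigl(s_g(|V_g|-s_g)\bigr)$ edges for that group alone. So ``at most $s_g$ may go uncovered for free'' is not a constraint you can encode sparsely inside ordinary matching.

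The paper escapes exactly this by leaving the world of plain matching: it contracts each slack set $S'_g$ into a \emph{single} node $s'_g$ with a degree upper bound of $|V_g|-r_g$, joined to all of $V_g$. This keeps the instance sparse ($n+\omega$ vertices, $m+n$ edges) but turns the problem into a maximum-weight degree-constrained subgraph ($f$-factor) computation, for which it invokes Gabow's algorithm to get the stated $O(mn+n^2\log n)$ bound; it also uses the $2M-\wwe(e)$ weight flip so that a single max-weight computation simultaneously enforces saturation of $V$ and minimizes the original cost (your ``minimum-weight saturating matching'' needs an equivalent device). To repair your proof, replace the unspecified gadget by this degree-constrained formulation and cite an $f$-factor algorithm, or else supply an actual sparse matching gadget---which, per the argument above, you will not be able to do.
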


We also show that the generalization of the W-PEC problem, where the requirement is not to cover a fixed number of vertices from each group, but rather to ensure that the total \emph{profit} collected from covered vertices in each group meets a given threshold, is NP-complete. In particular, we prove \Cref{thm:npc-wp-pec} via a reduction from the standard knapsack problem. Notably, the hardness holds even when the number of groups \(\omega = 2\).

\begin{thm}
\label{thm:npc-wp-pec}
The Weighted Prize-Collecting Partition Edge Cover (WP-PEC) problem is NP-complete, even when restricted to instances with only two groups.
\end{thm}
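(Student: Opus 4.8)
The plan is to work with the natural decision version of WP-PEC: augment the instance with a weight budget $\mathcal{B}$ and ask whether some edge set of total weight at most $\mathcal{B}$ meets every group's profit threshold. I would show (i) membership in NP and (ii) NP-hardness by a polynomial reduction from the decision version of Knapsack. Membership is immediate: a candidate edge set $F$ is a polynomial-size certificate, and verifying $\sum_{e\in F}\wwe(e)\le \mathcal{B}$ together with the $\omega$ profit inequalities takes linear time. The substance is the reduction.

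Given a Knapsack instance with items $[n]$, values $a_i$, sizes $b_i$, capacity $B$, and value-target $T$, I would build a two-group WP-PEC instance as follows. For each item $i$ create a hub $h_i$ and two private neighbours $x_i,y_i$, joined by the two unit-weight edges $e_i^{\mathrm{in}}=(h_i,x_i)$ and $e_i^{\mathrm{out}}=(h_i,y_i)$. Put $V_1=\{h_i,x_i : i\in[n]\}$ and $V_2=\{y_i:i\in[n]\}$. Assign vertex profits $\pi(x_i)=a_i$, $\pi(y_i)=b_i$, and $\pi(h_i)=P$ for a single large value $P:=1+\sum_i a_i$. Set the thresholds $\rho_1=nP+T$ and $\rho_2=W-B$ with $W:=\sum_i b_i$, and the budget $\mathcal{B}=n$. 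All introduced numbers are bounded by a polynomial in the Knapsack data, so this is a genuine polynomial-time reduction. The intended dictionary is: selecting $e_i^{\mathrm{in}}$ means \emph{item $i$ goes into the knapsack} (contributing its value $a_i$ to group $1$), while selecting $e_i^{\mathrm{out}}$ means \emph{item $i$ is left out} (contributing its size $b_i$ to group $2$).

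The crux, and the step I expect to be the main obstacle, is forcing the selection to pick exactly one of $e_i^{\mathrm{in}},e_i^{\mathrm{out}}$ for every $i$, because a WP-PEC solver faces only lower-bound constraints and could otherwise simply buy every edge, trivially satisfying both thresholds. Two devices pin this down. First, choosing $P$ larger than $\sum_i a_i$ makes the group-$1$ threshold $nP+T$ unreachable unless all $n$ hubs are covered; since the only edges incident to $h_i$ are $e_i^{\mathrm{in}}$ and $e_i^{\mathrm{out}}$, every hub forces at least one of its two edges into the solution. Second, covering $n$ distinct hubs needs at least $n$ unit-weight edges, so the budget $\mathcal{B}=n$ is tight and permits exactly one edge per hub. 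Writing $S=\{i: e_i^{\mathrm{in}}\text{ chosen}\}$, the remaining items each contribute $e_i^{\mathrm{out}}$, so group $1$ collects $nP+\sum_{i\in S}a_i$ and group $2$ collects $\sum_{i\notin S}b_i=W-\sum_{i\in S}b_i$. The thresholds then read $\sum_{i\in S}a_i\ge T$ and $\sum_{i\in S}b_i\le B$, i.e.\ $S$ is exactly a feasible Knapsack solution; conversely, reversing the construction turns any such $S$ back into a valid edge set of weight $n$. This establishes the equivalence, and since the hardness is obtained with both $V_1$ and $V_2$ in play and no third group, it already holds for $\omega=2$.
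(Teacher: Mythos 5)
Your proof is correct, but it takes a genuinely different route from the paper's. The paper reduces Knapsack to WP-PEC in the most direct way possible: one edge $e_i=(b_i,r_i)$ per item, with the item's \emph{cost} placed on the edge ($c(e_i)=c_i$) and its \emph{value} placed on the blue endpoint ($p(b_i)=p_i$); the blue group gets threshold $P$, the red group gets threshold $0$, and the budget is $C$. Selecting $e_i$ is then literally identical to selecting item $i$, so both directions of the equivalence are immediate. Note that in the paper's construction the second group is vacuous ($\rho_2=0$), so it is really a one-group hardness proof padded to two groups. Your construction is heavier --- hubs $h_i$ with an in/out edge pair, a large profit $P$ forcing every hub to be covered, and a tight budget of $n$ unit-weight edges forcing exactly one edge per hub --- but it buys something the paper's does not: hardness holds even when \emph{all edge weights are equal to} $1$, with the Knapsack cost constraint instead encoded through the second group's profits via the complement trick $\rho_2=W-B$. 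So your reduction shows the hardness comes from the two-group prize-collecting structure itself rather than from arbitrary edge weights, and it uses both groups nontrivially; the price is the extra gadgetry and one small technicality you should patch: since the problem definition requires $\rho_g\geq 0$, you should set $\rho_2=\max\{0,\,W-B\}$ (when $B\geq W$ the Knapsack size constraint is vacuous, and so is the corresponding group constraint, so the equivalence is preserved). Your forcing argument itself is sound: each edge touches exactly one hub, so covering all $n$ hubs under budget $n$ pins down exactly one edge per hub, after which the two thresholds translate exactly into $\sum_{i\in S}a_i\geq T$ and $\sum_{i\in S}b_i\leq B$.
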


\textbf{Related work:} The closely related \(K\)-center problem has also been well studied in the same framework. The first algorithm was proposed by Bandyapadhyay et al.~\cite{bandyapadhyay_constant_2019}, providing a polynomial time 2-approximation while allowing \(k+\omega\) centers. A subsequent result by Anegg et al.~\cite{anegg_technique_2022} eliminated the violation in \(k\) and obtained a 4-approximation in time \(O(n^\omega)\), which was later improved to a 3-approximation in time \(O(n^{\omega^2})\) by Jia et al.~\cite{jia_fair_2022}. The same framework for facility location and \(k\)-median objectives has also been studied; see, for example, Dabas et al. \cite{dabas2025flofair} and the references therein.

\textbf{Organization of the paper:} The remainder of the paper is organized as follows. Sections~\ref{sec:wp-pvc-2} and~\ref{sec:wp-pvc-approx} present the 2-approximation and bi-criteria approximation algorithms for WP-PVC, respectively. In Section~\ref{sec:w-pec}, we describe the algorithm for W-PEC followed by the NP-completeness proof of the prize-collecting variant of the W-PEC problem in \Cref{sec:npc-p-ec}.

\section{Partition Vertex Cover: 2-factor approximation}
\label{sec:wp-pvc-2}
First, we state the LP relaxation of the integer linear program:

\begin{align*}
    &\text{minimize} &\sum_{v\in V}\wv\lrp{v}y_v & \\
    &\text{subject to} &\sum_{e\in\cC_g}\we\lrp{e}x_e\geq\rho_g &\qquad\forall
    g\in\lrs{1,\ldots,\omega} \\
    & &y_u+y_v\geq x_e&\qquad\forall e=\lrs{u,v}\in E \\
    & &0 \leq x_e,y_v\leq1 &\qquad\forall e\in E, v\in V
\end{align*}

For the first step of the algorithm, we need to modify the instance by guessing
the heaviest \(\omega\) vertices, of which there are \(O\lrp{n^\omega}\); if
the number of vertices in the optimal solution is smaller than \(\omega\), 
we can iterate through the possible solutions.

For some guess, say that we select the set
\(V_\omega=\lrs{v'_1,\ldots,v'_\omega}\) with \(v'_1\leq\cdots\leq v'_\omega\).
We create a modified instance of the problem by adjusting the weight
function:
\[
    \wv'\lrp{v}= 
    \begin{cases}
        0 &\text{if } v\in V_\omega \\
        w\lrp{v}&\text{if } \wv\lrp{v}<w\lrp{v'_1} \\
        \infty &\text{else}
    \end{cases}
\]

Now, we take this instance \(\lrp{V,E,\wv',\we}\) and compute a fractional solution to
the LP, \(\lra{x',y'}\), which has cost \(\O_{LP}\). Recall that this is a lower
bound on the integral optimal solution. Going forward, we assume that
we have the correct guess \(V_\omega\). Since we have correctly identified the
\(\omega\) most expensive vertices, reducing their cost to \(0\), solving the
modified instance optimally with cost \(\O'\), and then adding back the guessed vertices will return the
original integral optimal solution, so 
\[
    \O_{LP}+\sum_{v\in V_\omega}w\lrp{v} \leq \O'+\sum_{v\in V_\omega}\wv\lrp{v}=\O.
\]

Now that we have the solution \(\lra{x',y'}\) to the LP, we want to modify it to
concentrate the coverage of each edge into exactly one endpoint. In order to accomplish this, we use a weighted variant of Lemma 1 from \cite{colorful-vertex-edge}. Their proof generalizes to the weighted case, but we present it for completeness.

\begin{lem}\label{lem:phi}
    \cite{colorful-vertex-edge}
There is a solution \(\lra{\tx,\ty}\) with the folowing properties:
\begin{enumerate}
    \item \(\cost{\tx,\ty}\leq2\O_{LP}\);
    \item there is a function \(\phi:E\to V\) such that for each edge \(e=\lrp{u,v}\), \(\phi\lrp{e}\in\lrs{u,v}\) and \(\tx_e=\ty_{\phi\lrp{e}}\);
    \item \(\lra{\tx,\ty}\) can be obtained in polynomial time.
\end{enumerate}
\end{lem}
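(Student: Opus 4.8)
The plan is to produce $\lra{\tx,\ty}$ from the LP solution $\lra{x',y'}$ by a single doubling step on the vertex variables, after which each edge variable is forced to equal the $\ty$-value at a chosen endpoint. Concretely, I would set $\ty_v=\min\{1,2y'_v\}$ for every $v\in V$; for each edge $e=\lrp{u,v}$ I would let $\phi(e)$ be an endpoint carrying the larger $y'$-value, breaking ties arbitrarily; and I would then define $\tx_e=\ty_{\phi(e)}$. Property~2 holds by construction, and since every step is an explicit assignment or comparison, the whole transformation runs in polynomial time, giving property~3.

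For property~1, I would use that the LP objective charges only the vertex variables, so $\cost{\tx,\ty}=\sum_{v\in V}\wv'(v)\,\ty_v=\sum_{v\in V}\wv'(v)\min\{1,2y'_v\}\le 2\sum_{v\in V}\wv'(v)\,y'_v=2\,\O_{LP}$, where the inequality is termwise from $\min\{1,2y'_v\}\le 2y'_v$.

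The step that actually requires an argument is feasibility of $\lra{\tx,\ty}$, and within that the group (profit) constraints, since I have replaced each $x'_e$ by $\tx_e$. The key claim is $\tx_e\ge x'_e$ for every $e$. Because $\phi(e)$ is the heavier endpoint, $y'_{\phi(e)}=\max\{y'_u,y'_v\}\ge\tfrac12\lrp{y'_u+y'_v}\ge\tfrac12 x'_e$, so $2y'_{\phi(e)}\ge x'_e$; and the truncation at $1$ causes no loss because $x'_e\le1$ as well, whence $\tx_e=\min\{1,2y'_{\phi(e)}\}\ge x'_e$. This gives $\sum_{e\in\cC_g}\we(e)\,\tx_e\ge\sum_{e\in\cC_g}\we(e)\,x'_e\ge\rho_g$ for every group $g$. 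The remaining constraints are immediate: $\ty_v\in[0,1]$ by truncation, $\tx_e=\ty_{\phi(e)}\in[0,1]$, and $\ty_u+\ty_v\ge\ty_{\phi(e)}=\tx_e$ since the $\ty$ values are nonnegative and $\phi(e)\in\{u,v\}$.

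I do not anticipate a genuine obstacle: the construction is one-shot and the verifications are direct. The only place demanding a moment's care is the inequality $\tx_e\ge x'_e$, where one must check that the factor-$2$ scaling survives the truncation at $1$ — handled by noting $x'_e\le1$ — so that raising the edge variables to match $\ty_{\phi(e)}$ can only help, never hurt, the profit thresholds.
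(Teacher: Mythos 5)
Your proposal is correct and follows essentially the same route as the paper: the same choice of $\phi(e)$ as the heavier endpoint, the same doubling-and-truncation $\ty_v=\min\lrs{1,2y'_v}$, $\tx_e=\ty_{\phi(e)}$, and the same termwise cost bound and feasibility checks. If anything, your treatment of the truncation is more careful than the paper's, which writes the chain $\tx_e=\min\lrs{1,2y'_{\phi(e)}}\geq 2y'_{\phi(e)}$ (literally false when $2y'_{\phi(e)}>1$), whereas you correctly observe that the case $2y'_{\phi(e)}>1$ is harmless because $x'_e\leq 1$.
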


\begin{proof}
For each edge \(e=\lrp{u,v}\), we define \(\phi\lrp{e}=\arg\max_{u,v}\lrs{y'_u,y'_v}\). Then, let \(\tx_e=\min\lrs{1,2y'_{\phi\lrp{e}}}\) for all \(e\in E\) and \(\ty_v=\min\lrs{1,2y'_v}\) for all \(v\in V\). Therefore, we have
\[
\tx_e = \min\lrs{1,2y'_{\phi\lrp{e}}} = \ty_{\phi\lrp{e}},
\]
which completes the second property. Moreover, for each \(e=\lrp{u,v}\),
\[
\tx_e = \ty_{\phi\lrp{e}} \leq \ty_u+\ty_v
\]
and 
\[
\tx_e = \min\lrs{1,2y'_{\phi\lrp{e}}} \geq 2y'_{\phi\lrp{e}} \geq y'_u+y'_v \geq x'_e,
\]
which implies that 
\[
\rho_g \leq \sum_{e\in\cC_g}\we\lrp{e}x'_e \leq \sum_{e\in\cC_g}\we\lrp{e}\tx_e
\]
for each group \(\cC_g\). As a result, the solution \(\lra{\tx,\ty}\) is feasible. To compute the cost and check the first property, we have 
\[
\sum_{v\in V}\wv\lrp{v}\ty_v  = \sum_{v\in V}\wv\lrp{v}\min\lrs{1,2y'_v} \leq 2\sum_{v\in V}\wv\lrp{v}y'_v = 2\O_{LP}.
\]
Finally, we compute the new solution by going through each edge once and checking its endpoints, which satisfies the final property.
\end{proof}

Next, we write a sparse LP. As in \cite{colorful-vertex-edge}, we define 
\(\cC_{g,v}=\lrs{e\in\cC_g :\phi\lrp{e}=v}\). 
By the definition of \(\phi\),
we can see that the sets \(\cC_{g,v}\) partition \(\cC_g\). Moreover, we write \(\we\lrp{\cC_{g,v}}=\sum_{e\in\cC_{v,g}}\we\lrp{e}\). Then, we have
\begin{align*}
    &\text{maximize} &\sum_{v\in V}\we\lrp{\cC_{1,v}}z_v & \\
    &\text{subject to} &\sum_{v\in V}\we\lrp{\cC_{g,v}}z_v\geq\rho_g &\qquad\forall
    g\in\lrs{2,\ldots,\omega} \\
    & &\sum_{v\in V}\wv\lrp{v}z_v \leq \sum_{v\in V}w\lrp{v}\tilde{y}_v & \\
    & &0\leq z_v\leq1 &\qquad\forall v\in V
\end{align*}

In the sparse LP, we want to maximize the coverage of \(\cC_1\) while
maintaining the coverage over all other groups. We also set the cost constraint
because we know that there exists a (fractional) feasible solution that pays no
more than \(\sum_{v\in V}w\lrp{v}\tilde{y}_v\leq2\O_{LP}\). We use a weighted variant of Lemma 2 from \cite{colorful-vertex-edge}. Once again, their proof generalizes easily, but we include it for completeness.

\begin{lem}\label{lem:sparse-opt}
    There is a solution to the sparse LP with objective function value at least \(\rho_1\).
\end{lem}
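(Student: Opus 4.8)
The plan is to avoid any rounding or re-optimization and instead simply exhibit one explicit feasible point of the sparse LP whose objective already meets $\rho_1$. The natural candidate is $z_v=\ty_v$, the solution produced by \Cref{lem:phi}: it lies in $[0,1]$ by construction (since $\ty_v=\min\lrs{1,2y'_v}$ with $y'_v\geq0$), so the box constraints hold immediately, and it matches the right-hand side of the budget constraint with equality, because that right-hand side is exactly $\sum_{v\in V}\wv\lrp{v}\ty_v$.

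The crux is a single identity that converts the sparse-LP coverage weights back into the edge-coverage of $\lra{\tx,\ty}$. Because the family $\lrs{\cC_{g,v}}_{v\in V}$ partitions $\cC_g$, and because $\phi\lrp{e}=v$ forces $\ty_v=\ty_{\phi\lrp{e}}=\tx_e$ by property (2) of \Cref{lem:phi}, I would compute
\[
\sum_{v\in V}\we\lrp{\cC_{g,v}}\ty_v=\sum_{v\in V}\sum_{e\in\cC_{g,v}}\we\lrp{e}\ty_v=\sum_{v\in V}\sum_{e\in\cC_{g,v}}\we\lrp{e}\tx_e=\sum_{e\in\cC_g}\we\lrp{e}\tx_e.
\]
In other words, plugging $\ty$ into any group expression of the sparse LP simply recovers the weighted coverage $\sum_{e\in\cC_g}\we\lrp{e}\tx_e$ of the earlier solution.

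With this identity in hand the remaining verification is immediate and uses only the feasibility already established inside \Cref{lem:phi}, namely $\tx_e\geq x'_e$ together with $\sum_{e\in\cC_g}\we\lrp{e}x'_e\geq\rho_g$. For each $g\in\lrs{2,\ldots,\omega}$ this gives $\sum_{v\in V}\we\lrp{\cC_{g,v}}\ty_v=\sum_{e\in\cC_g}\we\lrp{e}\tx_e\geq\rho_g$, so every group constraint is satisfied; and for $g=1$ the same chain yields $\sum_{v\in V}\we\lrp{\cC_{1,v}}\ty_v=\sum_{e\in\cC_1}\we\lrp{e}\tx_e\geq\rho_1$, which is exactly the claimed bound on the objective.

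Honestly, there is no serious obstacle here: the statement amounts to the observation that the fractional solution $\ty$, read through the map $\phi$, is already a feasible point of the sparse LP that over-covers $\cC_1$. The only place to be careful is the identity in the second paragraph—justifying the substitution $\ty_v=\tx_e$ edge-by-edge via $\phi$, and confirming that the partition property lets the double sum collapse to a single sum over all of $\cC_g$ with no double counting. Once that is nailed down, the box constraints, the budget equality, and the group inequalities all fall out mechanically.
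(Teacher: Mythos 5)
Your proof is correct and follows essentially the same route as the paper: both set $z_v=\ty_v$, collapse $\sum_{v\in V}\we\lrp{\cC_{g,v}}\ty_v$ into $\sum_{e\in\cC_g}\we\lrp{e}\tx_e$ using the partition property and $\tx_e=\ty_{\phi\lrp{e}}$, and then invoke the feasibility of $\lra{\tx,\ty}$ established in \Cref{lem:phi}. If anything, your write-up is slightly more careful—you make the double-sum collapse explicit and state the final inequality with the correct direction $\geq\rho_g$, whereas the paper's displayed chain contains a typographical $\leq$ where $\geq$ is clearly intended.
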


\begin{proof}
Set each \(z_v=\ty_v\). This trivially satisfies the second constraint. Now, take any group \(\cC_g\). Using \ref{lem:phi} and the properties of \(\cC_{g,v}\),
\begin{align*}
    \sum_{v\in V}\we\lrp{\cC_{g,v}}z_v &= \sum_{v\in V}\we\lrp{\cC_{g,v}}\ty_v \\
    &= \sum_{v\in V}\we\lrp{\cC_{g,v}}\tx_e \\
    &= \sum_{e\in\cC_g}\we\lrp{e}\tx_e \\
    &\leq \rho_g
\end{align*}
as \(\lra{\tx,\ty}\) is a feasible solution to the first LP. This means that the value of the objective function is lower bounded by \(\rho_1\) and that the first constraint is satisfied for each other group.
\end{proof}

Now, given an optimal solution \(\hat{z}\) to the sparse LP, we want to round it
to an integral solution \(z^\ast\). We need one final lemma from \cite{colorful-vertex-edge}.

\begin{lem}
    The number of fractional variables in \(\hat{z}\) is at most \(\omega\).
\end{lem}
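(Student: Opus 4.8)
The plan is to invoke the standard basic-feasible-solution (vertex) argument for sparse linear programs. First I would take $\hat{z}$ to be a \emph{basic} optimal solution, i.e.\ a vertex of the feasible polytope; such a solution exists because the feasible region sits inside the bounded box $\lrs{z : 0 \leq z_v \leq 1}$ and is therefore a polytope with vertices, and it can be computed in polynomial time by any LP algorithm that returns an extreme point. This is the only point that requires care, since the claimed bound is false for an arbitrary (non-extreme) optimal point: we must genuinely work with a vertex.

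Next I would count constraints. The sparse LP has $n$ variables $z_v$ and exactly $\omega$ ``substantive'' constraints: the $\omega - 1$ profit lower bounds $\sum_{v\in V}\we\lrp{\cC_{g,v}} z_v \geq \rho_g$ for $g \in \lrs{2,\dots,\omega}$, together with the single cost upper bound $\sum_{v\in V}\wv\lrp{v} z_v \leq \sum_{v\in V}\wv\lrp{v}\ty_v$. On top of these sit the $2n$ box constraints $0 \leq z_v \leq 1$.

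The key step is that at a vertex of a polytope in $\R^n$ there are $n$ linearly independent tight constraints. A coordinate with $\hat{z}_v \in \lrp{0,1}$ makes neither $z_v \geq 0$ nor $z_v \leq 1$ tight, so fractional variables contribute no tight box constraints, whereas each integral coordinate contributes exactly one. Writing $k$ for the number of fractional variables, the tight box constraints number exactly $n - k$, so at least $n - \lrp{n-k} = k$ of the $n$ tight constraints must be substantive. Since there are only $\omega$ substantive constraints in total, this forces $k \leq \omega$, which is the claim.

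I expect no serious obstacle here; the argument is entirely standard once $\hat{z}$ is chosen to be a vertex. The only bookkeeping to state cleanly is that the tight box constraints at distinct integral coordinates are linearly independent of one another and of the substantive rows, which is immediate because each such box constraint is supported on a single distinct coordinate.
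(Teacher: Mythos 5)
Your proof is correct and takes essentially the same route as the paper, which does not spell the argument out but instead defers to the corresponding lemma of Bandyapadhyay et al.\ \cite{colorful-vertex-edge}, whose proof is exactly this extreme-point counting of tight constraints ($n$ linearly independent tight constraints at a vertex, $n-k$ from the box, hence $k$ forced onto the $\omega$ substantive rows). Your explicit caveat that $\hat{z}$ must be taken as a \emph{basic} optimal solution---the statement is false for an arbitrary optimum---is the one point of care, and you handle it correctly.
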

The statement is limited in scope
to properties of the sparse LP, so the proof from \cite{colorful-vertex-edge} satisfies our generalization. Now, we can prove our \Cref{thm:VC1}.

\begin{proof}
We will define the sets \(V_1=\lrs{v\in
V:\hat{z}_v=1}\) and \(V_F=\lrs{v\in V:\hat{z}\in\lrp{0,1}}\). We know that
\(\lrm{V_F}\leq\omega\). While \(z^\ast\) may no longer be a feasible solution 
to the sparse LP, it violates the cost constraint be a limited amount. Moreover, 
when we use \(z^\ast\) to integrally select vertices in the vertex cover, we 
get a feasible solution. We can bound the cost:

\begin{align*}
    \sum_{v\in V}w\lrp{v}z^\ast_v &= \sum_{v\in V_1}w\lrp{v}z^\ast_v+\sum_{v\in
    V_F}w\lrp{v}z^\ast_v \\
    &= \sum_{v\in V_1}w\lrp{v}\hat{z}_v+\sum_{v\in V_F}w\lrp{v}\hat{z}_v
    +\sum_{v\in V_F}w\lrp{v}\lrp{z^\ast_v-\hat{z}_v} \\
    &= \sum_{v\in V}w\lrp{v}\hat{z}_v +\sum_{v\in V_F}w\lrp{v}\lrp{z^\ast_v-
    \hat{z}_v} \\
    &\leq \sum_{v\in V}w\lrp{v}\tilde{y}_v +\sum_{v\in V_F}w\lrp{v} \\
    &\leq 2\O_{LP} +\sum_{v\in V_\omega}w\lrp{v} \\
    &\leq 2\O'+ 2\sum_{v\in V_\omega}w\lrp{v} \\
    &= 2\O.
\end{align*}

Therefore, by rounding the solution to the sparse LP, we increase the solution
by no more than the total cost of the guessed vertices. This gives us a
2-approximation in time \(n^{O\lrp{\omega}}\).
\end{proof}

\subsection{Argument for simplicity}

Our algorithm differs in two ways from the \(\lrp{2+\epsilon}\)-approximation of \cite{colorful-vertex-edge}. First, we only enumerate through a fixed number of solutions. Rather than check solutions up through size \(\omega/\epsilon\), we look at solutions up to size \(\omega\) every time.

On the other hand, while the \(\lrp{2+\epsilon}\)-approximation only solves one full LP and one sparse LP, our approach solves \(\binom{n}{\omega}\) of each LP. However, by reframing the series of LPs to solve, we can reveal a structural simplicity. We can start by sorting the vertices in increasing order by weight, renaming the vertices so that \(\lrs{v_1,\ldots,v_n}\) is in weight order. Any vertex with higher weight than the lowest-weight guessed vertex \(v_1'\) is removed from the LP, so the size of the LP is determined by the position of \(v_1'\) in the order. If \(v_1'=v_k\), then we construct the LP using the smaller \(k-1\) vertices. The size of the LP, the objective function, and most of the constraints are now fixed. 

At this choice of \(v_1'\), there are \(\binom{n-k}{\omega-1}\) different choices for the remaining \(\omega-1\) guessed vertices.
We can think of guessing a vertex as removing it from the problem instance and subtracting the profit gained by each adjacent edge from the constraint corresponding to the edge group. Therefore, for each guess \(V_\omega\), the profit constraints can be written as
\[
\sum_{e\in\cC_g}\we\lrp{e}x_e \geq \rho_g-\sum_{v\in V_\omega}\sum_{\substack{e\in\delta\lrp{v} \\ \text{s.t. }e\in\cC_g}}\we\lrp{e}.
\]
With the structural similarity of the LPs, we can solve them in parallel for greater efficiency. 

Alternatively, we can use the discussion of sensitivity analysis in \cite{chvatal-LP-83} to see that we can reuse information gained about the solution in a previous LP to solve the next. In fact, between two ``adjacent'' LPs, only one guessed vertex is different. Therefore, the change in the right-hand-side constraints is limited by the change in profit incurred by moving the guess by one vertex. 

Moreover, moving from \(v_1'=v_k\) to \(v_1'=v_{k+1}\) requires adding one variable for the vertex and one variable for each edge incident on the vertex. By maintaining the old dictionaries, we can start the solving process at a better feasible solution. We need to move \(v_1'\) up from \(v_2\) to \(v_{n-\omega+1}\) to cover the full range of guesses.
\section{Partition Vertex Cover: Bi-Criteria Approximation}
\label{sec:wp-pvc-approx}

In this section, we present a bi-criteria approximation algorithm for the \emph{Weighted Prize-Collecting Partition Vertex Cover} (WP-PVC) problem. Our approach is based on solving the natural LP relaxation and then applying a simple threshold-based rounding scheme. The algorithm guarantees a constant approximation, while violating the group-wise edge coverage thresholds by at most a small multiplicative factor.

Let \( \langle y^*, x^* \rangle \) be an optimal solution to the LP relaxation of WP-PVC. Fix a threshold parameter \( \epsilon > 0 \). We round the fractional solution \( y^* \) as follows:
\begin{enumerate}
    \item Let \( V_o = \{ v \in V : y_v^* < \epsilon \} \) be the set of vertices with small fractional value. We round these vertices down by setting \( \hat{y}_v = 0 \) for all \( v \in V_o \)
    \item Let \( V_r = V \setminus V_o = \{ v \in V : y_v^* \geq \epsilon \} \). We round up all these vertices by setting \( \hat{y}_v = 1 \) for all \( v \in V_r \).
\end{enumerate}
Let \( S = V_r \) denote the final selected set of vertices.

Since we only round up fractional values \( y_v^* \geq \epsilon \), we have:
\[
\sum_{v \in S} \wv(v) \leq \sum_{v \in V_r} \frac{\wv(v)}{\epsilon} y_v^* \leq \frac{1}{\epsilon} \sum_{v \in V} \wv(v) y_v^* = \frac{\O_{LP}}{\epsilon}.
\]

Let us now analyze the edge coverage in each group \( \mathcal{C}_g \). An edge \( e = (u,v) \) is included in the rounded solution (i.e., covered) if either endpoint \( u \) or \( v \) lies in \( S \).

Let \( \mathcal{C}_g^{\text{out}} \subseteq \mathcal{C}_g \) be the set of edges with both endpoints in \( V_o \). These are the only edges potentially uncovered after rounding. For each such edge, since \( y_u^* < \epsilon \) and \( y_v^* < \epsilon \), we have:
\[
x_e^* \leq y_u^* + y_v^* < 2\epsilon.
\]
Hence, the total weight of uncovered edges in \( \mathcal{C}_g \) is:
\[
\sum_{e \in \mathcal{C}_g^{\text{out}}} \we(e) \leq \sum_{e \in \mathcal{C}_g^{\text{out}}} \we(e) \frac{x_e^*}{2\epsilon} \leq 2\epsilon \rho_g.
\]
Thus, the rounded solution covers at least:
\[
\sum_{\substack{e = (u,v) \in \mathcal{C}_g \\ u \in S \text{ or } v \in S}} \we(e) 
\geq \sum_{e \in \mathcal{C}_g} \we(e) x_e^* - 2\epsilon \rho_g 
\geq (1 - 2\epsilon) \rho_g.
\]

Hence, we achieve \Cref{thm:VC2}. This result yields a clean bi-criteria approximation for WP-PVC that guarantees a constant approximation vertex cost and near-complete group-wise coverage. Notably, our guarantees hold even when the number of groups \( \omega \) is large or grows with the input size. 

\section{Weighted Partition Edge Cover}
\label{sec:w-pec}
We consider the following generalization from \cite{colorful-vertex-edge}:

\begin{defn}[Weighted Budgeted Matching]
We are given a graph \(G=\lrp{V,E}\) with a weight function \(c:E\to\R_{\geq0}\), a partition of the vertex set \( V = S_1 \cup \dots \cup S_\omega \), and a parameter \( r_g \) for each group. We want to find a minimum-weight matching such that at least \( r_g \) vertices from group \( S_g \) are matched.
\end{defn}

First, we note that the reduction between Partition-EC and  Budgeted Matching in Lemma 5 of \cite{colorful-vertex-edge} holds in the weighted case. 

\begin{lem}\label{lem:PEC-WBM}
If Weighted Budgeted Matching can be solved in time \(T\lrp{n,m}\), then Partition-EC can be solved in time \(T\lrp{2n,m+n}+O\lrp{m+n}\). Here $n$ and $m$ refer to the number of nodes and edges respectively.
\end{lem}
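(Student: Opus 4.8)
The plan is to adapt the classical reduction from (weighted) edge cover to matching, based on vertex doubling, and then to check that it respects both the edge weights and the group-wise coverage budgets. From a W-PEC instance \(\lrp{G,c,\lrs{V_g},\lrs{r_g}}\) I would build a Weighted Budgeted Matching instance on a graph \(G'\) with \(2n\) vertices and \(m+n\) edges. Take \(V\lrp{G'}=V\cup\bar V\), where \(\bar V=\lrs{\bar v:v\in V}\) is a disjoint copy. Keep every original edge \(e=\lrp{u,v}\in E\) in \(G'\) with its weight \(c(e)\), and for each \(v\in V\) add a pendant edge \(\lrp{v,\bar v}\) of weight \(\mu(v):=\min_{e\ni v}c(e)\), the cheapest edge incident to \(v\); isolated vertices, which can never be covered, are discarded up front. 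For the groups I set \(S_g=V_g\) with the same requirement \(r_g\) for \(g\in[\omega]\), and place all copies \(\bar V\) into one additional group with requirement \(0\), so that matching a copy never contributes to any genuine budget.

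The core of the argument is a weight-monotone correspondence between matchings in \(G'\) and edge covers in \(G\) under which a vertex \(v\in V\) is covered exactly when it is matched. For the direction matching \(\to\) cover, given a matching \(M\) in \(G'\) I form \(C\) by keeping every original edge of \(M\) and, for each pendant edge \(\lrp{v,\bar v}\in M\), adding the cheapest incident edge \(e_v\) (of weight \(\mu(v)\)). Then every \(V\)-vertex matched by \(M\) is covered by \(C\), so \(C\) covers at least \(r_g\) vertices of each \(V_g\); since each pendant contributes exactly its weight \(\mu(v)=c(e_v)\) and coinciding edges are counted once, \(c(C)\le c(M)\). For the reverse direction I would take an optimal cover, prune it to be inclusion-minimal (which cannot increase its weight with nonnegative \(c\)), and use the standard structural fact that a minimal edge cover decomposes into vertex-disjoint stars. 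In each star I keep one leaf edge as a genuine matching edge and route every other leaf through its pendant, whose weight \(\mu(\cdot)\) is at most that of the discarded star edge. This yields a matching \(M\) that matches exactly the covered set, hence at least \(r_g\) vertices of each \(S_g=V_g\), with \(c(M)\le c(C)\). The two inequalities force the optima to coincide, and both maps are constructive.

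For the running time, computing all \(\mu(v)\) and assembling \(G'\) is \(O\lrp{m+n}\); solving Weighted Budgeted Matching on the resulting \(2n\)-vertex, \(\lrp{m+n}\)-edge instance costs \(T\lrp{2n,m+n}\); and translating the returned matching back into an edge cover, by replacing each pendant with its precomputed cheapest edge, is another \(O\lrp{m+n}\) pass. The total is \(T\lrp{2n,m+n}+O\lrp{m+n}\), as claimed.

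I expect the main obstacle to be the weight bookkeeping rather than the combinatorial skeleton, since the unweighted version is already Lemma 5 of \cite{colorful-vertex-edge}. The two places that must be checked with care are (i) that setting the pendant weights to \(\mu(v)\) is exactly what makes both \(c(C)\le c(M)\) and \(c(M)\le c(C)\) hold, so no additive slack creeps in under arbitrary nonnegative weights, and (ii) the star decomposition of a minimal edge cover, which is what lets the cover \(\to\) matching direction reroute the surplus leaves through pendants without increasing weight or dropping any covered vertex. The partition constraint also needs the small device of a zero-requirement group for the copies, ensuring pendant edges never inflate a real budget.
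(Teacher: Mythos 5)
Your proposal is correct and takes essentially the same approach as the paper: the paper's (very terse) proof likewise modifies the vertex-doubling reduction of Lemma 5 of \cite{colorful-vertex-edge} by giving each pendant (auxiliary) edge the weight of its vertex's cheapest incident edge. Your write-up simply supplies the details the paper leaves implicit — the zero-requirement group for the copies, the star decomposition of an inclusion-minimal cover, and the two weight-monotone translations between matchings and covers.
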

In order to be matched or covered at minimum cost, a vertex will pay for its least expensive adjacent edge, so we can modify the reduction by giving the edge between each vertex and its auxillary vertex this minimum adjacent weight. Moreover, we eliminate the dependence on \(\omega\).

Now, we give a simple reduction to the standard problem of finding a maximum weight matching in a graph. This allows us to generalize the problem to weighted graphs, improve the running time, and avoid the reduction to tropical matching completely. Our algorithm also admits the use of standard libraries for weighted matching.

\begin{enumerate}
\item Let \(M\) be large with \(M> \sum_e c\lrp{e} \). 
\item Let $H$ be a copy of $G$ with the following vertex sets added. For each color $g$ add a set $S'_g$ of $|S_g|-r_g$ nodes. Add edges between all nodes in $S'_g$ and $S_g$ for each $g$. The weight of an edge $e$ in $H$ is $2M-c\lrp{e}$ when $e \in E$, and the weight of an edge $(x,y)$ where $x \in S'_g, y \in S_g$ is $M$.
\item Find a maximum weight matching \( \mathcal{M}_H \) in $H$ using the algorithm by Gabow \cite{gabow-90,Gabow18} or the linear time approximation by Duan et al \cite{duan-lin-MWM-18}. The matched edges induced by the nodes of $G$ should be a minimum weight matching satisfying the group requirements. If the weight of the maximum weight matching is at most $(n-1)M$ then no solution exists.
\end{enumerate}

\begin{figure}[h]
    \centering
    \begin{tikzpicture}

\coordinate (b1) at (-2,0);
\coordinate (b2) at (0,1);
\coordinate (b3) at (2,0);
\coordinate (g1) at (-1,-2.5);
\coordinate (g2) at (1,-2.5);
\coordinate (r1) at (2,-1.5);

\draw (b1) -- (g1) node[midway, left] {11};
\draw (g1) -- (b2) node[midway, left] {2};
\draw (b2) -- (b3) node[midway, above] {5};
\draw (g1) -- (g2) node[midway, below left] {7};
\draw (b3) -- (r1) node[midway, right] {6};
\draw (b2) -- (r1) node[midway, left] {3};

\filldraw[color=violet] (b1) circle (2pt) node[above] {1};
\filldraw[color=violet] (b2) circle (2pt) node[above] {1};
\filldraw[color=violet] (b3) circle (2pt) node[above] {1};
\filldraw[color=teal] (g1) circle (2pt) node[below] {2};
\filldraw[color=teal] (g2) circle (2pt) node[below] {2};
\filldraw[color=olive] (r1) circle (2pt) node[right] {3};

\draw[->] ($(b3)!0.5!(r1)+(1.5,0)$) -- ($(b3)!0.5!(r1)+(3,0)$);

\coordinate (adj) at (8,0);
\coordinate (mb1) at ($(b1)+(adj)$);
\coordinate (mb2) at ($(b2)+(adj)$);
\coordinate (mb3) at ($(b3)+(adj)$);
\coordinate (mg1) at ($(g1)+(adj)$);
\coordinate (mg2) at ($(g2)+(adj)$);
\coordinate (mr1) at ($(r1)+(adj)$);

\coordinate (sb1) at ($(mb1)!0.5!(mb2)+(0,2)$);
\coordinate (sb2) at ($(mb3)!0.5!(mb2)+(0,2)$);
\coordinate (sg1) at ($(mg1)!0.5!(mg2)-(0,2)$);

\draw (mb1) -- (mg1) node[midway, left] {\footnotesize $f(11)$};
\draw (mg1) -- (mb2) node[midway, below left] {\footnotesize $f(2)$};
\draw (mb2) -- (mb3) node[midway, below] {\footnotesize $f(5)$};
\draw (mg1) -- (mg2) node[midway, below] {\footnotesize $f(7)$};
\draw (mb3) -- (mr1) node[midway, right] {\footnotesize $f(6)$};
\draw (mb2) -- (mr1) node[midway, left] {\footnotesize $f(3)$};

\draw (mb1) -- (sb1) node[midway, left] {\footnotesize $M$};
\draw (mb2) -- (sb1) node[midway, left] {\footnotesize $M$};
\draw (mb3) -- (sb1) node[midway, right] {\footnotesize $M$};
\draw (mb1) -- (sb2) node[midway, below] {\footnotesize $M$};
\draw (mb2) -- (sb2) node[midway, right] {\footnotesize $M$};
\draw (mb3) -- (sb2) node[midway, right] {\footnotesize $M$};

\draw (mg1) -- (sg1) node[midway, left] {\footnotesize $M$};
\draw (mg2) -- (sg1) node[midway, right] {\footnotesize $M$};

\draw[color=violet,dashed] ($(sb1)!0.5!(sb2)$) ellipse (1.5 and 0.5);
\draw[color=violet,dashed] ($(mb1)!0.5!(mb3)+(0,0.25)$) ellipse (3 and 1);
\draw[color=teal,dashed] ($(mg1)!0.5!(mg2)$) ellipse (1.5 and 0.5);
\draw[color=teal,dashed] (sg1) circle (0.5);
\draw[color=olive,dashed] (mr1) circle (0.5);

\filldraw (mb1) circle (2pt);
\filldraw (mb2) circle (2pt);
\filldraw (mb3) circle (2pt);
\filldraw (mg1) circle (2pt);
\filldraw (mg2) circle (2pt);
\filldraw (mr1) circle (2pt);
\filldraw (sb1) circle (2pt);
\filldraw (sb2) circle (2pt);
\filldraw (sg1) circle (2pt);
\node at ($(sb1)-(0.8,0)$) {\(S_1'\)};
\node at ($(mb3)+(1.2,0)$) {\(S_1\)};
\node at ($(mg1)-(0.8,0)$) {\(S_2\)};
\node at ($(sg1)+(0.8,0)$) {\(S_2'\)};
\node at ($(mr1)+(0.8,0)$) {\(S_3\)};

\end{tikzpicture}
    \caption{The reduction from Weighted Budgeted Matching to Maximum Matching for an instance with \(r_1=r_2=r_3=1\). Here $f(x)=2M-x$.}
    \label{fig:reduction}
\end{figure}
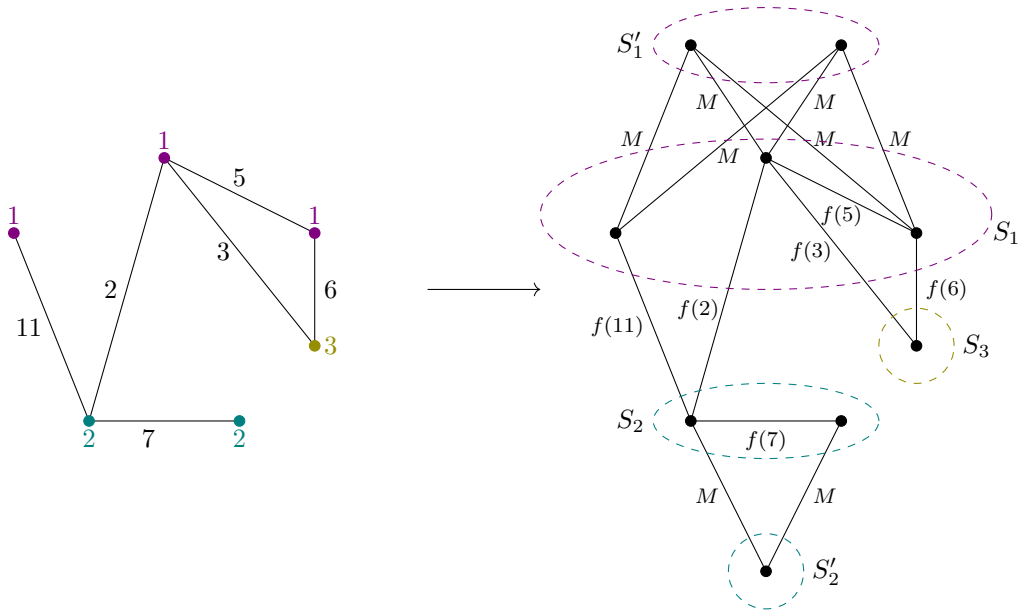

\begin{lem}\label{lem:min-max}
If maximum-weight matching can be solved in time \(T\lrp{n,m}\), then Weighted Budgeted Matching can be solved in time \(T\lrp{2n,m+n^2}\).
\end{lem}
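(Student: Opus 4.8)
The plan is to prove correctness and the running-time bound for the reduction described above, showing that a maximum-weight matching $\mathcal{M}_H$ in $H$ encodes an optimal Weighted Budgeted Matching in $G$. The guiding principle is a two-level objective hidden in the big-$M$ weighting: because every real edge (an edge of $G$) has weight $2M - c(e)$ and covers two $G$-vertices, while every auxiliary edge has weight $M$ and covers exactly one $G$-vertex, a matching in $H$ covering a set of $G$-vertices of size $t$ has weight exactly $tM - c(\mathcal{M}_H \cap E)$. Since $c(\mathcal{M}_H \cap E) \leq \sum_e c(e) < M$, the term $tM$ dominates: maximizing weight first maximizes the number $t$ of matched $G$-vertices, and only then (as a tie-break among matchings with the same $t$) minimizes the real-edge cost $c(\mathcal{M}_H \cap E)$. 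I would record this weight identity first, since every later argument rests on it.

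Next I would establish the feasibility correspondence together with the $(n-1)M$ threshold. I would show that any feasible budgeted matching $\mathcal{M}$ in $G$ — one covering at least $r_g$ vertices in each $S_g$ — extends to a matching in $H$ covering all $n$ $G$-vertices. Indeed, the vertices of $S_g$ left uncovered by $\mathcal{M}$ number at most $|S_g| - r_g = |S'_g|$, so each can be matched to a distinct auxiliary vertex of $S'_g$, using that auxiliary edges stay inside a single group. By the weight identity such a matching has weight $nM - c(\mathcal{M}) > (n-1)M$. Conversely, a matching covering fewer than $n$ $G$-vertices has weight at most $(n-1)M$, so feasibility of the budgeted instance is equivalent to the maximum weight in $H$ exceeding $(n-1)M$, which justifies the ``no solution'' test.

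The heart of the argument — and the step I expect to require the most care — is extracting an optimal budgeted matching from a maximum-weight $\mathcal{M}_H$ and verifying it meets every group requirement. Assuming feasibility, the dominance of the $tM$ term forces $\mathcal{M}_H$ to cover all $n$ $G$-vertices, so every vertex of $S_g$ is matched either by a real edge or by an auxiliary edge. Since $S'_g$ has only $|S_g| - r_g$ vertices, at most $|S_g| - r_g$ of the $S_g$-vertices are matched by auxiliary edges, hence at least $r_g$ are matched by real edges; thus $\mathcal{M}_H \cap E$ is a feasible budgeted matching. Its cost equals $nM$ minus the (maximal) weight of $\mathcal{M}_H$, and by the extension in the previous paragraph every feasible budgeted matching arises this way with the same cost, so minimizing $c$ over feasible budgeted matchings coincides with maximizing weight in $H$ and $\mathcal{M}_H \cap E$ is optimal. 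The subtle points to verify are that the per-group auxiliary capacity exactly converts ``cover all $G$-vertices'' into the $r_g$ lower bounds, and that the big-$M$ tie-break cleanly separates the two objectives.

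Finally I would bound the size of $H$: it adds $\sum_g (|S_g| - r_g) \leq n$ auxiliary vertices, for at most $2n$ vertices total, and $\sum_g (|S_g| - r_g)|S_g| \leq \sum_g |S_g|^2 \leq n^2$ auxiliary edges, for at most $m + n^2$ edges, with the graph built in $O(m + n^2)$ time. Invoking the maximum-weight matching subroutine on $H$ then yields the claimed $T(2n, m + n^2)$ running time.
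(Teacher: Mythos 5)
Your proposal is correct and follows essentially the same route as the paper: the same big-$M$ construction of $H$, the same extension of a feasible budgeted matching by per-group auxiliary edges (using $|S'_g| = |S_g| - r_g$ as capacity), and the same argument that a maximum-weight matching in $H$ must match all $n$ vertices of $G$ and therefore yields a feasible, optimal budgeted matching. The paper distributes this reasoning across the reduction description and Lemmas~\ref{lem:feasibility} and~\ref{lem:optimality}, whereas you unify it through the explicit weight identity $tM - c(\mathcal{M}_H \cap E)$ and add the explicit $2n$-vertex, $m+n^2$-edge size count; these are presentational refinements, not a different argument.
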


\begin{lem}
\label{lem:feasibility}
In \( \mathcal{M}_H \), at least \( r_g \) vertices from each group \( S_g \) are matched by edges from \( G \).
\end{lem}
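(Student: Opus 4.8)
The plan is to exploit the fact that $M$ is chosen larger than the total edge cost to force a maximum-weight matching in $H$ to saturate as many vertices of $G$ as possible, and then to use a counting argument on the auxiliary vertices $S'_g$ to extract the $r_g$ lower bound. Throughout I assume the instance is feasible (otherwise Step~3 reports no solution via the weight threshold); feasibility is precisely what guarantees the targets can be met.

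First I would record the weight accounting. If $\mathcal{M}_H$ uses $a$ edges of $G$ and $b$ auxiliary edges, then each $G$-edge covers two $G$-vertices while each auxiliary edge covers exactly one $G$-vertex (its $S_g$-endpoint), so the number $N$ of matched $G$-vertices equals $2a+b$. Its weight is
\[
\sum_{e \in \mathcal{M}_H \cap E}\bigl(2M - c(e)\bigr) + bM = M(2a+b) - \sum_{e \in \mathcal{M}_H \cap E} c(e) = MN - \sum_{e \in \mathcal{M}_H \cap E} c(e),
\]
and since $0 \le \sum_{e \in \mathcal{M}_H \cap E} c(e) \le \sum_e c(e) < M$, the weight lies in the interval $\bigl(M(N-1),\,MN\bigr]$. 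Consequently any matching covering strictly more $G$-vertices strictly beats one covering fewer: if $N_1 > N_2$ then the weight of the first exceeds $M(N_1-1) \ge MN_2$, which is at least the weight of the second. Hence a maximum-weight matching maximizes $N$.

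Next I would show that in a feasible instance the maximum achievable $N$ equals $n = |V|$, i.e.\ every $G$-vertex is matched. Starting from a feasible budgeted matching $\mathcal{M}^*$ in $G$ (one covering at least $r_g$ vertices of each $S_g$), I would match, for each group, the vertices of $S_g$ left uncovered by $\mathcal{M}^*$ --- of which there are at most $|S_g| - r_g = |S'_g|$ --- to distinct vertices of $S'_g$, which is possible because $S'_g$ is completely joined to $S_g$. This augments $\mathcal{M}^*$ into an $H$-matching saturating all of $V$, witnessing $N = n$. By the previous paragraph the maximum-weight matching $\mathcal{M}_H$ therefore matches every vertex of $G$.

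Finally the counting step. Fix a group $S_g$ and let $g_{\mathrm{aux}}$ be the number of its vertices matched by auxiliary edges; since $S'_g$ has only $|S_g| - r_g$ vertices and each such edge consumes a distinct one, $g_{\mathrm{aux}} \le |S_g| - r_g$. As every vertex of $S_g$ is matched, the number matched by edges of $G$ is $|S_g| - g_{\mathrm{aux}} \ge r_g$, which is the claim. The step I expect to require the most care is the first one --- verifying that the large-$M$ decomposition genuinely forces the maximum-weight matching to maximize the number of saturated $G$-vertices, hinging on the separating inequality $\sum_{e \in \mathcal{M}_H \cap E} c(e) < M$ between consecutive values of $N$; the remaining two steps are a direct construction and a counting argument.
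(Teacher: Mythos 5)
Your proof is correct and follows essentially the same approach as the paper: both extend a feasible budgeted matching to an $H$-matching saturating all of $V$ via the auxiliary vertices, use the choice $M > \sum_e c(e)$ to show a maximum-weight matching must saturate every vertex of $G$ (the weight gap between saturating $N$ and $N-1$ vertices), and finish with the counting argument that auxiliary edges can absorb at most $|S_g| - r_g$ vertices of $S_g$. Your explicit interval $\bigl(M(N-1),\, MN\bigr]$ for the matching weight is a slightly more careful rendering of the paper's informal ``weight between $nM$ and $(n-1)M$'' step, but the underlying argument is identical.
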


\begin{proof}
Let \( \mathcal{M}^* \subseteq G \) be an optimal solution to the budgeted matching problem. We now construct a matching \( \mathcal{M}_H^* \) in \( H \) by including all edges from \( \mathcal{M}^* \), and  for each vertex in \( S_g \) not matched in \( \mathcal{M}^* \), adding an auxiliary edge to a unique free node in \( S'_g \).

Let \( k_g^* \) be the number of vertices in \( S_g \) matched by \( \mathcal{M}^* \). Since \( k_g^* \ge r_g \), we have, \(|S_g| - k_g^* \le |S_g| - r_g = |S'_g|\), so there are enough auxiliary vertices in \( S'_g \) to complete the matching without conflict. Thus, \( \mathcal{M}_H^* \) is a feasible matching in \( H \) that matches all the vertices in $S_g$ for every $g$. This also ensures that the cost of the maximum matching in $H$ is strictly greater than $(n-1)M$ since each matched node in $V$ contributes $M$ to the weight of the matching, and even after subtracting the weight of a few edges we have a solution of weight between $nM$ and $(n-1)M$. Any matching that does not match all nodes of $G$ (that also belong to $H$) cannot have weight larger than $(n-1)M$.

Now consider the maximum weight matching \( \mathcal{M}_H \) in \( H \). Suppose, for contradiction, that \( \mathcal{M}_H \) matches fewer than \( r_g \) vertices from some group \( S_g \) via edges from \( G \). The number of vertices matched by auxiliary edges is at most \( |S'_g| = |S_g| - r_g \). Therefore, at least one vertex \( u \in S_g \) must be unmatched by \( \mathcal{M}_H \). Therefore, for large $M$, \( \mathcal{M}^*_H \) is better than \( \mathcal{M}_H \), which is a contradiction.
\end{proof}

\begin{lem}
\label{lem:optimality}
The matching $\mathcal{M} = \mathcal{M}_H \cap E(G)$ with original edge weights has total weight equal to the optimum of the Weighted Budgeted Matching problem.
\end{lem}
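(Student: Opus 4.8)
The plan is to exhibit a weight-reversing correspondence, up to the fixed additive constant $nM$, between matchings of $H$ that saturate every vertex of $V$ and feasible budgeted matchings of $G$; once this is in place, a maximum-weight matching in $H$ is forced to correspond to a minimum-weight feasible matching in $G$. Throughout, write $n=|V|$ and $c(\mathcal{M})=\sum_{e\in\mathcal{M}}c(e)$ for the original edge weight, and let $\mathrm{wt}(\cdot)$ denote the weight of a matching measured in $H$.

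First I would record the key weight identity. Consider any matching $N$ in $H$ that saturates all of $V$, and split its edges into the $G$-edges $N\cap E(G)$ and the auxiliary edges incident to some $S'_g$. If $N$ uses $a$ edges of $G$ and $b$ auxiliary edges, then the $G$-edges saturate $2a$ vertices of $V$, the auxiliary edges saturate $b$ vertices of $V$ (each auxiliary edge has exactly one endpoint in $V$), and since $V$ is fully matched these are disjoint and exhaust $V$, so $2a+b=n$. Using the $H$-weights $2M-c(e)$ on each $G$-edge and $M$ on each auxiliary edge,
\[
\mathrm{wt}(N)=\sum_{e\in N\cap E(G)}\bigl(2M-c(e)\bigr)+bM=(2a+b)M-c\bigl(N\cap E(G)\bigr)=nM-c\bigl(N\cap E(G)\bigr).
\]
Thus, among $V$-saturating matchings, maximizing $\mathrm{wt}(\cdot)$ is exactly equivalent to minimizing the original weight of the selected $G$-edges.

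Next I would invoke \Cref{lem:feasibility}, which supplies two facts about the maximum-weight matching $\mathcal{M}_H$: it saturates all of $V$ (its weight exceeds $(n-1)M$, and by the computation above any non-saturating matching has weight at most $(n-1)M$, since $c(N\cap E(G))\ge 0$), and it matches at least $r_g$ vertices of each $S_g$ through edges of $G$. Hence $\mathcal{M}=\mathcal{M}_H\cap E(G)$ is a feasible budgeted matching, and the identity gives $c(\mathcal{M})=nM-\mathrm{wt}(\mathcal{M}_H)$.

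Finally a two-sided comparison closes the argument. Let $\mathcal{M}^*$ be an optimal budgeted matching. The construction in \Cref{lem:feasibility} extends $\mathcal{M}^*$ to a $V$-saturating matching $\mathcal{M}_H^*$ of $H$ by adding one auxiliary edge for each vertex of each $S_g$ left unmatched by $\mathcal{M}^*$, so by the identity $\mathrm{wt}(\mathcal{M}_H^*)=nM-c(\mathcal{M}^*)$. Since $\mathcal{M}_H$ is maximum-weight, $nM-c(\mathcal{M})=\mathrm{wt}(\mathcal{M}_H)\ge\mathrm{wt}(\mathcal{M}_H^*)=nM-c(\mathcal{M}^*)$, so $c(\mathcal{M})\le c(\mathcal{M}^*)$; and since $\mathcal{M}$ is feasible while $\mathcal{M}^*$ is optimal, $c(\mathcal{M})\ge c(\mathcal{M}^*)$. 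The two costs therefore agree. The main point requiring care is that the weight identity be applied only to $V$-saturating matchings, so the crux is confirming that the maximizer $\mathcal{M}_H$ really does saturate $V$; this is precisely where the choice $M>\sum_e c(e)$ enters, since it makes leaving any vertex of $V$ unmatched strictly suboptimal.
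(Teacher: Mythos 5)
Your proposal is correct and follows essentially the same route as the paper's proof: extend an optimal budgeted matching $\mathcal{M}^*$ to a $V$-saturating matching $\mathcal{M}_H^*$ in $H$, use the identity $\mathrm{wt}(N)=nM-c(N\cap E(G))$ for $V$-saturating matchings, and combine the maximality of $\mathcal{M}_H$ with the feasibility of $\mathcal{M}$ (from \Cref{lem:feasibility}) to force equality of the two costs. Your only departure is cosmetic but welcome: you isolate the weight identity as a statement about arbitrary $V$-saturating matchings and explicitly re-derive why $\mathcal{M}_H$ must saturate $V$ (via the $(n-1)M$ threshold), a point the paper asserts by appeal to \Cref{lem:feasibility} rather than spelling out.
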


\begin{proof}
Let $\mathcal{M}^* \subseteq G$ be an optimal solution to the budgeted matching problem with total weight $\text{OPT}$. As in proof of Lemma~\ref{lem:feasibility}, we can extend $\mathcal{M}^*$ to a matching $\mathcal{M}_H^*$ in $H$ by adding auxiliary edges of weight $M$ to unmatched vertices in $V$. Since all $n = |V|$ vertices in $V$ are matched in $\mathcal{M}_H^*$, and exactly $n - 2|\mathcal{M}^*|$ auxiliary edges are used, the total weight of $\mathcal{M}_H^*$ is:
\[
\sum_{e \in \mathcal{M}_H^*} c_H(e) = \sum_{e \in \mathcal{M}^*}(2M - c\lrp{e}) + (n - 2|\mathcal{M}^*|)M = nM - \text{OPT}.
\]

Let $\mathcal{M}_H$ be a maximum weight matching in $H$, and let $\mathcal{M} = \mathcal{M}_H \cap E(G)$. 
Note that $\mathcal{M}_H$ also matches all $n$ vertices in $V$. Therefore,
\[
\sum_{e \in \mathcal{M}_H} c_H(e)= \sum_{e \in \mathcal{M}}(2M - c\lrp{e}) + (n - 2|\mathcal{M}|)M = nM - \sum_{e \in \mathcal{M}} c(e).
\]

Since $\mathcal{M}_H$ maximizes the weight,
\[
\sum_{e \in \mathcal{M}_H} c_H(e) \ge \sum_{e \in \mathcal{M}_H^*} c_H(e)  \Rightarrow \sum_{e \in \mathcal{M}} c(e) \le \text{OPT}.
\]

By Lemma~\ref{lem:feasibility}, $\mathcal{M}$ is a feasible solution to the budgeted matching problem, so equality must hold and $\mathcal{M}$ is optimal.
\end{proof}

To speed up the algorithm, we do a reduction to the problem of finding a max weight subgraph with upper bounds on degrees ($f$-factor). We contract the set $S'_g$ into a single node $s'_g$ with edges to all vertices in $S_g$ of weight $M$ and give a degree constraint of $|S_g|-r_g$ to $s'_g$. All nodes of $V$ will have a degree constraint of 1.
This new graph has only $n+\omega$ vertices, and at most an extra $n$ edge, preserving the sparsity of the graph. We now use the algorithm by Gabow \cite{Gabow18} (see also \cite{GabowS21a}) which gives $O(nm+n^2 \log n)$ running time.

Lemmas \ref{lem:PEC-WBM}, \ref{lem:min-max}, \ref{lem:feasibility} and \ref{lem:optimality} together imply \Cref{thm:EC}.

\subsection{Argument for simplicity}
Our reduction to maximum weight matching has the advantage of improved run time, extension of the original result to weighted graphs as well as mapping to a standard problem for which libraries can be used, rather than someone having to implement an algorithm for (weighted) tropical matching.
\section{Prize-Collecting Partition Edge Cover: NP-Completeness}
\label{sec:npc-p-ec}

We start by defining the decision version of the Weighted Prize-Collecting Partition Edge Cover problem.

\begin{defn}[Weighted Prize-Collecting Partition Edge Cover (WP-PEC)]
Given a graph $G = (V, E)$, a profit function $p : V \rightarrow \mathbb{R}_{\geq 0}$ on the vertices, a weight function $c : E \rightarrow \mathbb{R}_{\geq 0}$ on the edges, and a partition of the vertex set into groups $V_1, V_2, \dots, V_\omega$, where each group $V_g$ has an associated profit requirement $\rho_g \in \mathbb{R}_{\geq 0}$, does there exist a subset of edges $E' \subseteq E$ such that the total weight $\sum_{e \in E'} c(e)$ is at most a given budget $C$, and the total profit of covered vertices in each group meets the respective threshold, i.e.,
\[
\sum_{v \in V_g \cap V(E')} p(v) \geq \rho_g \quad \text{for all } g = 1, \dots, \omega?
\]
\end{defn}

We next prove \Cref{thm:npc-wp-pec} using a reduction from the classical \textsc{Knapsack} problem, which is known to be NP-complete.

\paragraph{Knapsack Problem.} Given a set of $N$ items, each with a profit $p_i \geq 0$ and a cost $c_i \geq 0$, and given integers $P$ and $C$, does there exist a subset of items $S \subseteq \{1, \dots, N\}$ such that
\[
\sum_{i \in S} p_i \geq P \quad \text{and} \quad \sum_{i \in S} c_i \leq C \ ?
\]

\paragraph{Reduction.} From an instance of the Knapsack problem, we construct an instance of WP-PEC as follows.

We create a bipartite graph with $N$ blue vertices $b_1, \dots, b_N$ (one for each item), each having profit $p(b_i) = p_i$, and $N$ red vertices $r_1, \dots, r_N$, each having profit $p(r_i) = 0$. For each $i$, we add an edge $e_i = (b_i, r_i)$ with weight $c(e_i) = c_i$.

We define the two vertex groups as follows:
\begin{enumerate}
    \item $V_1 = \{b_1, \dots, b_N\}$ with requirement $\rho_1 = P$;
    \item $V_2 = \{r_1, \dots, r_N\}$ with requirement $\rho_2 = 0$.
\end{enumerate}
We set the total weight budget to $C$.

We now show that a solution to the Knapsack instance exists if and only if a solution to the constructed WP-PEC instance exists.

\begin{lem}
If the Knapsack instance has a solution, then the corresponding WP-PEC instance has a solution.
\end{lem}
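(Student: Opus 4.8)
The plan is to show the forward direction of the equivalence: given a Knapsack solution $S \subseteq \{1,\dots,N\}$ satisfying $\sum_{i\in S} p_i \geq P$ and $\sum_{i\in S} c_i \leq C$, I will exhibit an explicit edge set $E'$ that is feasible for the constructed WP-PEC instance. The natural candidate is $E' = \{e_i : i \in S\}$, selecting exactly the edges corresponding to the chosen items.

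\begin{proof}
Suppose the Knapsack instance has a solution $S \subseteq \{1,\dots,N\}$, so that $\sum_{i\in S} p_i \geq P$ and $\sum_{i\in S} c_i \leq C$. Define the edge set $E' = \{e_i : i \in S\}$, where $e_i = (b_i, r_i)$ as in the reduction. First I check the budget constraint: since each edge $e_i$ has weight $c(e_i) = c_i$, the total weight is
\[
\sum_{e \in E'} c(e) = \sum_{i \in S} c_i \leq C,
\]
so the budget bound $C$ is respected. Next I verify the group profit requirements. The vertices covered by $E'$ are exactly $V(E') = \{b_i : i \in S\} \cup \{r_i : i \in S\}$, since each selected edge covers its two endpoints and the graph is a disjoint union of the edges $e_i$. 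For group $V_1 = \{b_1,\dots,b_N\}$, the covered blue vertices are precisely $\{b_i : i \in S\}$, and recalling $p(b_i) = p_i$, the collected profit is
\[
\sum_{v \in V_1 \cap V(E')} p(v) = \sum_{i \in S} p_i \geq P = \rho_1,
\]
meeting the threshold for group one. For group $V_2 = \{r_1,\dots,r_N\}$, every red vertex has profit $p(r_i) = 0$, so the collected profit is trivially $0 \geq 0 = \rho_2$, meeting the threshold for group two. Thus $E'$ satisfies both profit requirements and the budget, so it is a valid solution to the WP-PEC instance.
\end{proof}

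The argument is essentially a direct transcription: the bipartite structure with disjoint edges $e_i$ makes the coverage sets transparent, so there is no genuine obstacle here. The main thing to be careful about is matching the correct coverage set to each group --- verifying that selecting $e_i$ covers $b_i$ (contributing $p_i$ toward $\rho_1$) and that the red group's requirement $\rho_2 = 0$ is vacuous. The real content of the NP-completeness reduction lies in the converse direction (and in checking the reduction is polynomial-time), where one must argue that any feasible edge set can be restricted to an item subset without increasing cost or decreasing profit; but that is a separate lemma.
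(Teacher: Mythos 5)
Your proof is correct and follows the same approach as the paper: take $E' = \{e_i : i \in S\}$, check the budget bound directly, observe that the covered blue vertices contribute exactly $\sum_{i \in S} p_i \geq P = \rho_1$, and note that the requirement $\rho_2 = 0$ for the red group is vacuous. No issues.
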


\begin{proof}
Suppose there exists a subset of items $S \subseteq \{1, \dots, N\}$ such that $\sum_{i \in S} p_i \geq P$ and $\sum_{i \in S} c_i \leq C$. Define $E' = \{e_i \mid i \in S\}$.

We verify the WP-PEC conditions for $E'$:
\begin{itemize}
    \item \textbf{Cost bound:} $\sum_{e \in E'} c(e) = \sum_{i \in S} c_i \leq C$.
    \item \textbf{Group $V_1$ requirement:} The blue vertices covered are $\{b_i \mid i \in S\}$, contributing total profit $\sum_{i \in S} p_i \geq P = \rho_1$.
    \item \textbf{Group $V_2$ requirement:} The red vertices $\{r_i \mid i \in S\}$ contribute $0$ profit, which meets $\rho_2 = 0$.
\end{itemize}
Hence, $E'$ is a valid solution to the WP-PEC instance.
\end{proof}

\begin{lem}
If the WP-PEC instance has a solution, then the corresponding Knapsack instance has a solution.
\end{lem}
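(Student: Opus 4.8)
The plan is to prove the converse direction of the reduction: given a feasible solution $E'$ to the WP-PEC instance, I will extract a subset $S$ of items that satisfies both Knapsack constraints. The key observation is that the constructed graph is a disjoint union of the $N$ edges $e_i = (b_i, r_i)$, so the only way to cover a blue vertex $b_i$ is to include the single incident edge $e_i$. First I would define $S = \{ i : e_i \in E' \}$, discarding any structure beyond which edges were chosen. This immediately gives a clean correspondence between selected edges and selected items.

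Next I would verify the two Knapsack conditions. For the cost bound, since $c(e_i) = c_i$, the budget constraint $\sum_{e \in E'} c(e) \leq C$ translates directly into $\sum_{i \in S} c_i \leq C$. For the profit bound, I would note that the set of covered blue vertices $V_1 \cap V(E')$ is exactly $\{ b_i : i \in S \}$, because $b_i$ is covered if and only if $e_i \in E'$ (no other edge touches $b_i$). Hence the group-$V_1$ requirement $\sum_{v \in V_1 \cap V(E')} p(v) \geq \rho_1 = P$ becomes $\sum_{i \in S} p_i \geq P$. The group-$V_2$ requirement is vacuous since $\rho_2 = 0$, so it imposes no further obligation.

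The one subtlety to address is that $E'$ might in principle contain edges that cover $b_i$ without being ``needed,'' or the solver might choose edges wastefully; but because each blue vertex has a unique incident edge, there is no ambiguity, and the map from $E'$ to $S$ is exact rather than merely one-directional. I would also remark that any edge in $E'$ is of the form $e_i$ for some $i$ (the graph has no other edges), so the correspondence $E' \leftrightarrow S$ is a genuine bijection between admissible edge sets and item subsets. I do not anticipate a real obstacle here: the construction is deliberately rigid so that the backward direction is essentially a restatement. Combined with the forward lemma already proven, this establishes the biconditional, and since the reduction is clearly polynomial-time and WP-PEC is in NP (a candidate $E'$ can be verified in polynomial time by summing weights and group profits), \Cref{thm:npc-wp-pec} follows, with the hardness holding already at $\omega = 2$.
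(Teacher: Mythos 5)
Your proposal is correct and follows essentially the same route as the paper's proof: define $S = \{i : e_i \in E'\}$, then translate the cost budget and the group-$V_1$ profit requirement directly into the Knapsack constraints. Your added observation that the graph is a disjoint union of the edges $e_i$ (so $b_i$ is covered if and only if $e_i \in E'$) just makes explicit the structural fact the paper relies on implicitly.
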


\begin{proof}
Suppose $E' \subseteq E$ is a feasible solution to WP-PEC with total cost at most $C$ and group requirements satisfied.

Let $S = \{i \mid e_i \in E'\}$. Then:
\begin{itemize}
    \item \textbf{Cost:} $\sum_{i \in S} c_i = \sum_{e \in E'} c(e) \leq C$.
    \item \textbf{Profit:} Since only blue nodes have nonzero profit, and $V_1$ contains all blue nodes, we have:
    \[
    \sum_{i \in S} p_i = \sum_{b_i \in V_1 \cap V(E')} p(b_i) \geq \rho_1 = P.
    \]
\end{itemize}
Thus, $S$ is a feasible solution to the Knapsack problem.
\end{proof}

Finally, since the reduction is polynomial time, and verifying a solution to WP-PEC can easily be done in polynomial time, the problem lies in NP. This concludes the proof of \Cref{thm:npc-wp-pec}.

\section{Conclusion}
Partial vertex cover (when $\omega =1$) can be solved by a combinatorial algorithm without solving the LP, and this can be done using a primal dual approach, or local-ratio. One interesting question is if a purely combinatorial algorithm can be developed for this problem.

{\bf Acknowledgements:}
 We would like to thank Amol Deshpande, Hal Gabow, Tanmay Inamdar, Julian Mestre and Emily Pitler for useful discussions.\\

\printbibliography

\end{document}